\begin{document}

\title{Exact-Regenerating Codes between MBR and MSR Points}

\author{\IEEEauthorblockN{Toni Ernvall}
\IEEEauthorblockA{Turku Center for Computer Science\\ \& Department of Mathematics and Statistics\\ FI-20014 University of Turku\\ Finland\\
Email: {tmernv}@utu.fi.}
}
\maketitle

\newtheorem{definition}{Definition}[section]
\newtheorem{thm}{Theorem}[section]
\newtheorem{proposition}[thm]{Proposition}
\newtheorem{lemma}[thm]{Lemma}
\newtheorem{corollary}[thm]{Corollary}
\newtheorem{exam}{Example}[section]
\newtheorem{conj}{Conjecture}
\newtheorem{remark}{Remark}[section]

\newcommand{\La}{\mathbf{L}}
\newcommand{\h}{{\mathbf h}}
\newcommand{\Z}{{\mathbf Z}}
\newcommand{\R}{{\mathbf R}}
\newcommand{\C}{{\mathbf C}}
\newcommand{\D}{{\mathcal D}}
\newcommand{\F}{{\mathbf F}}
\newcommand{\HH}{{\mathbf H}}
\newcommand{\OO}{{\mathcal O}}
\newcommand{\G}{{\mathcal G}}
\newcommand{\A}{{\mathcal A}}
\newcommand{\B}{{\mathcal B}}
\newcommand{\I}{{\mathcal I}}
\newcommand{\E}{{\mathcal E}}
\newcommand{\PP}{{\mathcal P}}
\newcommand{\Q}{{\mathbf Q}}
\newcommand{\M}{{\mathcal M}}
\newcommand{\separ}{\,\vert\,}
\newcommand{\abs}[1]{\vert #1 \vert}

\begin{abstract}
In this paper we study distributed storage systems with exact repair. We give a construction for regenerating codes between the minimum storage regenerating (MSR) and the minimum bandwidth regenerating (MBR) points and show that in the case that the parameters $n$, $k$, and $d$ are close to each other our constructions are close to optimal when comparing to the known capacity when only functional repair is required. We do this by showing that when the distances of the parameters $n$, $k$, and $d$ are fixed but the actual values approach to infinity, the fraction of the performance of our codes with exact repair and the known capacity of codes with functional repair approaches to one.
\end{abstract}

\section{Introduction}
\subsection{Regenerating Codes}
In a distributed storage system a file is dispersed across $n$ nodes in a network such that given any $k\, (<n)$ of these nodes one can reconstruct the original file. We also want to have such a redundancy in our network that if we lose a node then any $d\, (<n)$ of the remaining nodes can repair the lost node. We assume that each node stores the amount $\alpha$  of information, \emph{e.g.}, $\alpha$ symbols over a finite field, and in the repair process each repairing node transmits the amount $\beta$ to the new replacing node (called a \emph{newcomer}) and hence the total repair bandwidth is $\gamma=d \beta$. We also assume that $k \leq d$.

The repair process can be either functional or exact. By functional repair we mean that the nodes may change over time, \emph{i.e.}, if a node $v_{i}^{\text{old}}$ is lost and in the repair process we get a new node $v_{i}^{\text{new}}$ instead, then we may have $v_{i}^{\text{old}} \neq v_{i}^{\text{new}}$. If only functional repair is assumed then the capacity of the system, denoted by $C_{k,d}(\alpha,\gamma)$, is known. Namely, it was proved in the pioneering work by Dimakis \emph{et al.} \cite{dimakis} that
$$
C_{k,d}(\alpha,\gamma)=\sum_{j=0}^{k-1} \min \left\{ \alpha, \frac{d-j}{d}\gamma  \right\}.
$$

If the size of the stored file is fixed to be $B$ then the above expression for the capacity defines a tradeoff between the node size $\alpha$ and the total repair bandwidth $\gamma$. The two extreme points are called the minimum storage regeneration (MSR) point and the minimum bandwidth regeneration (MBR) point. The MSR point is achieved by first minimizing $\alpha$ and then minimizing $\gamma$ to obtain
\begin{equation}\label{MSR}
\left\{
  \begin{array}{l}
\alpha = \frac{B}{k} \\
\gamma = \frac{d B}{k(d-k+1)}.
  \end{array} \right.
\end{equation}
By first minimizing $\gamma$ and then minimizing $\alpha$ leads to the MBR point
\begin{equation}\label{MBR}
\left\{
  \begin{array}{l}
\alpha = \frac{2d B}{k(2d-k+1)} \\
\gamma = \frac{2d B}{k(2d-k+1)}.
  \end{array} \right.
\end{equation}

In this paper we are interested in codes that have exact repair. The concepts of exact regeneration and exact repair were introduced independently in \cite{explicitconst}, \cite{reducingrepair}, and \cite{searchingfor}. Exact repair means that the network of nodes does not vary over time, \emph{i.e.}, if a node $v_{i}^{\text{old}}$ is lost and in the repair process we get a new node $v_{i}^{\text{new}}$, then $v_{i}^{\text{old}} = v_{i}^{\text{new}}$. We denote by
$$
C_{n,k,d}^{\text{exact}} (\alpha,\gamma)
$$
 the capacity of codes with exact repair with $n$ nodes each of size $\alpha$, with total repair bandwidth $\gamma$, and for which each set of $k$ nodes can recover the stored file and each set of $d$ nodes can repair a lost node.

We have by definition that
$$
C_{n,k,d}^{\text{exact}} (\alpha,\gamma) \leq C_{k,d} (\alpha,\gamma).
$$
It was proved in \cite{kumar}, \cite{MSRequal}, and \cite{optimalMDS} that the codes with exact repair achieve the MSR point and in \cite{kumar} that the codes with exact repair achieve the MBR point. The impossibility of constructing codes with exact repair at essentially all interior points on the storage-bandwidth tradeoff curve was shown in \cite{nonachievability}.

\subsection{Contributions and Organization}
In Section \ref{construction} we give a construction for codes between MSR and MBR points with exact repair. In Section \ref{inequalities} we derive some inequalities from our construction. Section \ref{example} provides an example showing  that, in the special case of $n=k+1=d+1$, our construction is close to optimal when comparing to the known capacity when only functional repair is required. In Section \ref{analysis} we show that when the distances of the parameters $n$, $k$, and $d$ are fixed but the actual values approach to infinity, the fraction of performance of our codes with exact repair and the known capacity of functional-repair codes approaches to one.

\section{Construction}\label{construction}
Assume we have a storage system $DSS_1$ with exact repair for parameters $$(n,k,d)$$ with a node size $\alpha$ and the total repair bandwidth $\gamma=d\beta$. In this section we propose a construction that gives a new storage system for parameters $$(n'=n+1,k'=k+1,d'=d+1).$$ Let $DSS_1$ consist of nodes $v_1,\dots,v_n$, and let the stored file $F$ be of maximal size $C^{\text{exact}}_{n,k,d}(\alpha,\gamma)$.

Let then $DSS_{1+}$ denote a new system consisting of the original storage system $DSS_1$ and one extra node $v_{n+1}$ storing nothing. It is clear that $DSS_{1+}$ is a storage system for parameters $$(n+1,k+1,d+1)$$ and can store the original file $F$.

Let $\sigma_j$ be the permutations of the set $\{ 1, \dots, n+1 \}$ for $j=1,\dots, (n+1)!$\,. Assume that $DSS_{j}^{\text{new}}$ is a storage system for $j=1,\dots,(n+1)!$ corresponding to the permutation $\sigma_j$ such that $DSS_{j}^{\text{new}}$ is  exactly the same as $DSS_{1+}$ except that the order of the nodes is changed corresponding to the permutation $\sigma_j$, \emph{i.e.}, the $i$th node in $DSS_{1+}$ is the $\sigma_j(i)$th node in $DSS_{j}^{\text{new}}$.

Using these $(n+1)!$ new systems as building blocks we construct a new system $DSS_2$ such that its $j$th node for $j=1,\dots,n+1$ stores the $j$th node from each system $DSS_{i}^{\text{new}}$ for $i=1,\dots,(n+1)!$\,.

It is clear that this new system $DSS_2$ works for parameters $(n+1,k+1,d+1)$, has exact repair property, stores a file of size $(n+1)! C^{\text{exact}}_{n,k,d}(\alpha,\gamma)$ and has a node size
$$
\alpha_2=((n+1)!-n!)\alpha=n\cdot n!\alpha
$$
and total repair bandwidth
$$
\gamma_2=((n+1)!-n!)\gamma=n\cdot n!\gamma\,.
$$
Moreover, because of the symmetry of the construction we have $\beta_2=n\cdot n!\beta$\,.

This construction implies the inequality
$$
C^{\text{exact}}_{n+1,k+1,d+1}(n\cdot n!\alpha,n\cdot n!\gamma) \geq (n+1)! C^{\text{exact}}_{n,k,d}(\alpha,\gamma),
$$
that is,
\begin{equation}\label{firstbound}
C^{\text{exact}}_{n+1,k+1,d+1}(\alpha,\gamma) \geq \frac{n+1}{n} C^{\text{exact}}_{n,k,d}(\alpha,\gamma).
\end{equation}

\begin{exam}\label{helppoesim}
If we relax on the typical requirement of a DSS to be homogeneous, meaning that each node is transmitting the same amount $\beta$ of information in the repair process, and instead only require that the total repair bandwidth $\gamma$ is constant (\emph{i.e.,} $\beta$ may take different values depending on the node), then we can build our construction a little easier. Let $(n,k,d)=(3,2,2)$ and $DSS_1$ be a distributed storage system with exact repair. Let $DSS_{j}^{\text{new}}$ be a storage system with $4$ nodes for $j=1,\dots,4$ where the $j$th node stores nothing, the $i$th node for $i<j$ stores as the $i$th node in the original system $DSS_1$, and the $i$th node for $i>j$ stores as the $(i-1)$th node in the original system $DSS_1$. That is, in the $j$th subsystem $DSS_{j}^{\text{new}}$ the $j$th node stores nothing while the other nodes are as those in the original system $DSS_1$.

Using these four new systems as building blocks we construct a new system $DSS_2$ such that its $j$th node for $j=1,\dots,4$ stores the $j$th node from each system $DSS_{i}^{\text{new}}$ for $i=1,\dots,4$. Hence each node in $DSS_2$ stores $(4-1)\alpha=3\alpha$ and the total repair bandwidth is $(4-1)\gamma=3\gamma$.

For example, if the original system $DSS_1$ consists of nodes $v_1$ storing $x$, $v_2$ storing $y$, and $v_3$ storing $x+y$ then $DSS_{1}^{\text{new}}$ consists of nodes $u_{11}$ storing nothing, $u_{12}$ storing $x_1$, $u_{13}$ storing $y_1$, and $u_{14}$ storing $x_1+y_1$. Similarly $DSS_{2}^{\text{new}}$ consists of nodes $u_{21}$ storing $x_2$, $u_{22}$ storing nothing, $u_{23}$ storing $y_2$, and $u_{24}$ storing $x_2+y_2$ and so on. Then in the resulting system the first node $w_1$ consists of nodes $u_{11}$ (storing nothing), $u_{21}$ (storing $x_2$), $u_{31}$ (storing $x_3$), and $u_{41}$ (storing $x_4$). The stored file is $(x_1,x_2,x_3,x_4,y_1,y_2,y_3,y_4)$.
\end{exam}

\begin{figure}[ht]
     \includegraphics[width=9cm]{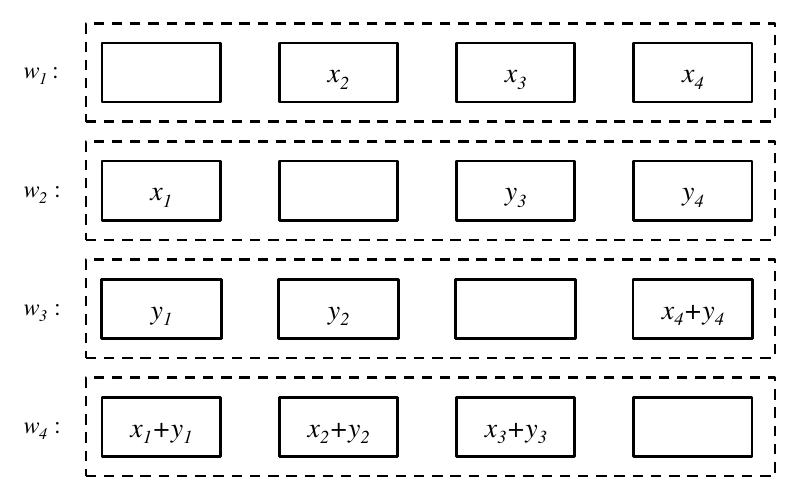}
     \caption{The figure illustrates the DSS built in Example \ref{helppoesim}. It consists of nodes $w_1$, $w_2$, $w_3$, and $w_4$.}

\end{figure}

\section{Inequalities from the Construction}\label{inequalities}
Next we will derive some inequalities for the capacity in the case of exact repair. Using Equation \ref{firstbound} inductively we get
\begin{thm}\label{bound}
For an integer $j \in [ 0,k-1 ]$ we have
$$
C^{\text{exact}}_{n,k,d}\left(\alpha,\gamma\right) \geq \frac{n}{n-j} C^{\text{exact}}_{n-j,k-j,d-j}(\alpha,\gamma).
$$
\end{thm}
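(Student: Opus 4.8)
The theorem follows from iterating the base inequality (\ref{firstbound}) $j$ times, so the plan is a straightforward induction on $j$. Inequality (\ref{firstbound}) states that incrementing all three parameters by one improves the capacity by at least the factor $\frac{n+1}{n}$. Reading this in reverse, it relates a system with parameters $(m,k',d')$ to one with parameters $(m-1,k'-1,d'-1)$; stepping down $j$ times from $(n,k,d)$ should accumulate a product of such factors that telescopes.

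The plan is to argue by induction on $j$. For the base case $j=0$ the claimed inequality reads $C^{\text{exact}}_{n,k,d}(\alpha,\gamma) \geq \frac{n}{n} C^{\text{exact}}_{n,k,d}(\alpha,\gamma)$, which holds trivially with equality. For the inductive step, suppose the bound holds for some $j$ with $0 \leq j \leq k-2$; I would apply (\ref{firstbound}) to the smaller system with parameters $(n-j-1, k-j-1, d-j-1)$ in place of $(n,k,d)$, which yields
$$
C^{\text{exact}}_{n-j,k-j,d-j}(\alpha,\gamma) \geq \frac{n-j}{n-j-1} C^{\text{exact}}_{n-j-1,k-j-1,d-j-1}(\alpha,\gamma).
$$
Chaining this with the induction hypothesis and cancelling the common factor $n-j$ gives the bound for $j+1$, completing the induction up to $j = k-1$.

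The main subtlety to check is not the arithmetic, which telescopes cleanly since consecutive factors $\frac{n}{n-j}$ and $\frac{n-j}{n-j-1}$ share the term $n-j$, but rather that the hypotheses of (\ref{firstbound}) remain valid at every step. Specifically, (\ref{firstbound}) was derived under the standing assumptions that the parameters satisfy $k \leq d < n$ and $k \leq d$, and the construction producing it presupposes the existence of an exact-repair system for the smaller parameter triple. I would verify that throughout the descent the shifted parameters $(n-i, k-i, d-i)$ for $0 \leq i \leq j$ continue to satisfy these constraints: subtracting the same integer from each of $n$, $k$, $d$ preserves the differences $n-k$, $d-k$, and $n-d$, so the inequalities $k-i \leq d-i < n-i$ are maintained as long as $k-i$ stays positive, which is guaranteed by the restriction $j \in [0, k-1]$. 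This invariance of the parameter gaps under the uniform shift is exactly what makes the iteration legitimate and is the only point requiring genuine attention.
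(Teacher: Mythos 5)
Your proposal is correct and matches the paper's argument exactly: the paper proves Theorem~\ref{bound} precisely by applying inequality~(\ref{firstbound}) inductively, with the telescoping product of factors $\frac{n-i}{n-i-1}$ collapsing to $\frac{n}{n-j}$. Your additional check that the parameter gaps are preserved under the uniform shift is a sensible (if implicit in the paper) verification, not a deviation.
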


It is proved in \cite{kumar}, \cite{MSRequal}, and \cite{optimalMDS} that the MSR point can be achieved if exact repair is assumed. As a consequence of this and Theorem \ref{firstbound} we get the following bound.
\begin{thm}\label{boundMSR}
For integers $1 \leq i \leq k$ we have
$$
C^{\text{exact}}_{n,k,d}\left(\alpha,\frac{(d-k+i)\alpha}{d-k+1}\right) \geq \frac{ni\alpha}{n-k+i}\,.
$$
\end{thm}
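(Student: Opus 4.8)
The plan is to combine Theorem \ref{bound} with the known achievability of the MSR point under exact repair. The guiding observation is that the particular bandwidth $\frac{(d-k+i)\alpha}{d-k+1}$ appearing in the statement is engineered to coincide with the MSR bandwidth of a suitably reduced storage system, so that the inductive inequality from the construction can be chained onto an exact-repair MSR code.

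First I would set $j = k - i$, so that the hypothesis $1 \leq i \leq k$ translates into $0 \leq j \leq k-1$, which is exactly the range where Theorem \ref{bound} applies. Applying that theorem reduces the parameters $(n,k,d)$ to $(n-j,\,k-j,\,d-j) = (n-k+i,\, i,\, d-k+i)$, yielding
$$
C^{\text{exact}}_{n,k,d}(\alpha,\gamma) \geq \frac{n}{n-k+i}\, C^{\text{exact}}_{n-k+i,\,i,\,d-k+i}(\alpha,\gamma)
$$
for any common bandwidth $\gamma$.

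Next I would specialize $\gamma$ to the MSR bandwidth of the reduced system. For parameters $(n',k',d') = (n-k+i,\,i,\,d-k+i)$, the MSR point \eqref{MSR} gives node size $\alpha$, file size $B' = k'\alpha = i\alpha$, and bandwidth $\gamma' = \frac{d'\alpha}{d'-k'+1}$. The crucial computation is the cancellation $d'-k'+1 = (d-k+i) - i + 1 = d-k+1$, whence $\gamma' = \frac{(d-k+i)\alpha}{d-k+1}$, precisely the value in the statement. Since the MSR point is achievable with exact repair, a file of size $i\alpha$ can be stored at this operating point, so $C^{\text{exact}}_{n',k',d'}(\alpha,\gamma') \geq i\alpha$.

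Finally, substituting $\gamma = \gamma' = \frac{(d-k+i)\alpha}{d-k+1}$ into the reduced inequality gives
$$
C^{\text{exact}}_{n,k,d}\left(\alpha,\, \frac{(d-k+i)\alpha}{d-k+1}\right) \geq \frac{n}{n-k+i}\cdot i\alpha = \frac{ni\alpha}{n-k+i},
$$
as claimed. The argument is essentially bookkeeping, and I do not anticipate a substantive obstacle: the only place demanding care is matching the MSR bandwidth formula to the substituted parameters, i.e.\ verifying the index cancellation $d'-k'+1 = d-k+1$ and confirming that the endpoints $i=k$ (recovering the bare MSR point) and $i=1$ (reducing to the single-recovery system $(n-k+1,1,d-k+1)$) both stay inside the admissible range of Theorem \ref{bound}.
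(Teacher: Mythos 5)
Your proposal is correct and follows essentially the same route as the paper: apply Theorem \ref{bound} with $j=k-i$ to reduce to parameters $(n-k+i,\,i,\,d-k+i)$, then invoke exact-repair achievability of the MSR point there, using the cancellation $d'-k'+1=d-k+1$ to match the stated bandwidth. The only cosmetic difference is that the paper performs the change of variables $i=k-j$ at the end rather than at the start.
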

\begin{proof}
Write $n'=n-j,k'=k-j,d'=d-j$, $\alpha=\frac{B}{k'}$, and $\gamma=\frac{d' B}{k' (d'-k'+1)}$. It is proved in \cite{kumar}, \cite{MSRequal}, and \cite{optimalMDS} that
$$
C^{\text{exact}}_{n',k',d'}(\alpha,\gamma) = B,
$$
\emph{i.e.},
$$
C^{\text{exact}}_{n-j,k-j,d-j}\left(\alpha,\frac{(d-j) \alpha}{d-k+1}\right) = (k-j)\alpha.
$$

Hence by Theorem \ref{bound} we have
$$
C^{\text{exact}}_{n,k,d}\left(\alpha,\frac{(d-j) \alpha}{d-k+1}\right) \geq \frac{n(k-j)\alpha}{n-j}\,.
$$

Now a change of  variables by setting $i=k-j$ gives us  the  result.
\end{proof}

\section{Example: Case $n=k+1=d+1$}\label{example}
In this section we study the special case $n=k+1=d+1$ and compare it to the known capacity with the assumption of functional repair,
$$
C_{n-1,n-1}(\alpha,\gamma) = \sum_{j=0}^{n-2} \min \left\{ \alpha , \frac{n-1-j}{n-1} \gamma \right\}.
$$

Now our bound gives
$$
C^{\text{exact}}_{n,n-1,n-1}(\alpha,i\alpha) \geq \frac{ni\alpha}{1+i}
$$
so  we can write
$$
f_{n} (i) = \frac{ni\alpha}{1+i}
$$
for integers $i=1,\dots,k$. 

Notice that now in the extreme points our lower bound achieves the known capacity, \emph{i.e.},
$$
C^{\text{exact}}_{n,n-1,n-1}(\alpha,\alpha) = f_{n} (1) = \frac{n\alpha}{2}
$$
for the MBR point and
$$
C^{\text{exact}}_{n,n-1,n-1}(\alpha,k\alpha) = f_{n} (k) = (n-1)\alpha
$$
for the MSR point.

As an example we study the fraction
$$
\frac{f_{n} (i)}{C_{n-1,n-1}(\alpha,i\alpha)} = \frac{\frac{ni\alpha}{1+i}}{\sum_{j=0}^{n-2} \min \left\{ \alpha , \frac{n-1-j}{n-1} i\alpha \right\}}
$$
for integers $i \in [1,k]$. Writing it out we see that
\begin{equation}
\begin{split}
&\frac{f_{n} (i)}{C_{n-1,n-1}(\alpha,i\alpha)} \\
= & \frac{\frac{ni}{1+i}}{\sum_{j=0}^{T } 1 + \sum_{j=T + 1}^{n-2}  \frac{n-1-j}{n-1} i } \\
= & \frac{\frac{ni}{1+i}}{ T +1 + \frac{i}{2(n-1)} \cdot (n-T-1)(n-T-2)}\,, \\
\end{split}
\end{equation}
where $T=\lfloor (n-1)(1-\frac{1}{i}) \rfloor$

For large values of $n$ this is approximately
$$
\frac{2i^2}{2i^2+i-1} \geq \frac{8}{9}
$$
for all $i=1,\dots,k$.

\begin{figure}[ht]
     \includegraphics[width=9cm] {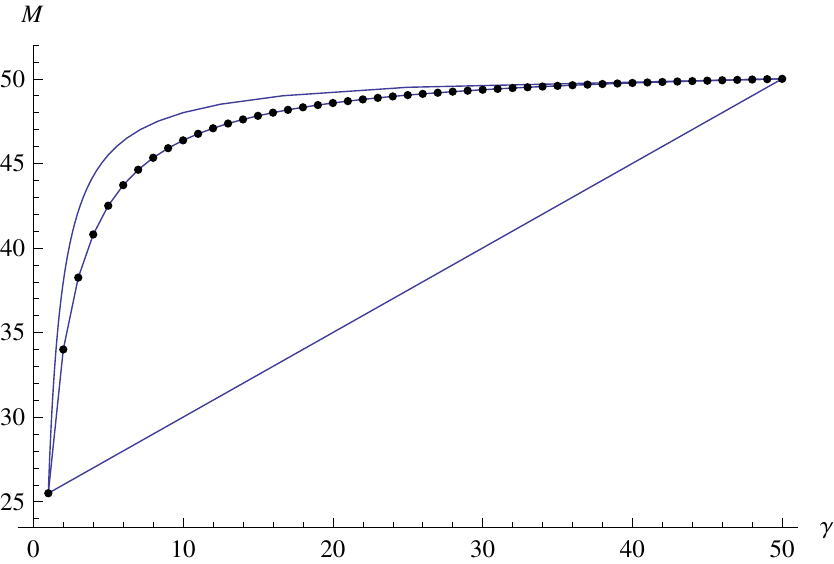}
     \caption{The figure shows the performance $M$ of our construction (dotted curve) between the capacity of functionally repairing codes (uppermost curve) and the trivial lower bound given by interpolation of the known MSR and MBR points when $(n,k,d)=(51,50,50)$, $\alpha=1$, and $\gamma \in [1,50]$.}

\end{figure}

\section{The case when $n$, $k$ and $d$ are close to each other}\label{analysis}
Next  we will study the special case where $n$, $k$ and $d$ are close to each other. We will do this by setting $n_M=n+M$, $k_M=k+M$ and $d_M=d+M$ and letting $M \rightarrow \infty$, and then examine how the capacity curve asymptotically behaves. The example in the previous section showed us that in that special case our bound is quite close to the capacity of functionally regenerating codes. However, in the previous section we fixed $i$ to be an integer and then assumed that $n$ is large. In this section we tie up the values $i$ and $M$ together to arrive at a situation where the total repair bandwidth stays on a fixed point between its minimal possible value given by the MBR point and its maximal possible value given by the MSR point.

For each $M$ the bound from Theorem \ref{boundMSR} gives
$$
C^{\text{exact}}_{n_M,k_M,d_M}\left(\alpha,\frac{(d_M-k_M+i)\alpha}{d_M-k_M+1}\right) \geq \frac{n_M i \alpha}{n-k+i}
$$
for $i=1,\dots,k_M$, hence in this section we write
$$
g_{M} (i) = \frac{n_M i \alpha}{n-k+i}
$$
for integers $i=1,\dots,k$ and extend this definition for $x \in [1,k]$ such that $g_{M} (x)$ is the piecewise linear curve defined by $g_{M} (i)$.

Let $s \in (0,1]$ be a fixed number and $i=1+s(k_M-1)$. We will study how the fraction
$$
\frac{g_{M} (i)}{C_{k_M,d_M}(\alpha,\frac{(d_M-k_M+i)\alpha}{d_M-k_M+1})}
$$
behaves as we let $M \rightarrow \infty$. Informally this tells how close our lower bound curve and the known capacity curve are to each other when $M$ is large, \emph{i.e.}, values $n_M,k_M,d_M$ are close to each other.

\begin{remark}
In the MSR point we have $$\gamma_{MSR}=\frac{d_M \alpha}{d_M-k_M+1}$$ and in the MBR point $$\gamma_{MBR}=\alpha.$$ Hence
$$
\alpha \cdot \frac{d_M-k_M+i}{d_M-k_M+1} = s\gamma_{MSR}+(1-s)\gamma_{MBR}.
$$
\end{remark}

\begin{thm}\label{asymptotic}
Let $s \in (0,1]$ be a fixed number and $i=1+s(k_M-1)$. Then
$$
\lim_{M \rightarrow \infty} \frac{g_{M} (i)}{C_{k_M,d_M}(\alpha,\frac{(d_M-k_M+i)\alpha}{d_M-k_M+1})} = 1.
$$
\end{thm}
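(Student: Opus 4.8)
The plan is to show that the numerator $g_{M}(i)$ and the denominator $C_{k_M,d_M}(\alpha,\gamma)$ are both equal to $k_M\alpha + O(1)$ as $M \to \infty$, whence their ratio tends to $1$ because $k_M \to \infty$. The structural fact driving everything is that $d_M - k_M = d-k$ and $n_M - k_M = n-k$ do not depend on $M$; write $\delta = d-k$ and $\nu = n-k$. Since $s>0$ is fixed and $i = 1 + s(k_M-1)$, we have $i \to \infty$ as $M\to\infty$, while $n_M = k_M + \nu$. The numerator is then immediate: writing $g_{M}(i) = n_M\alpha\cdot\frac{i}{\nu+i}$ and using $\frac{i}{\nu+i}\to 1$ together with $n_M = k_M+\nu$ gives $g_{M}(i) = k_M\alpha + O(1)$.

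The real work is the denominator $C_{k_M,d_M}(\alpha,\gamma)$ with $\gamma = \frac{(\delta+i)\alpha}{\delta+1}$. In the defining sum $\sum_{j=0}^{k_M-1}\min\{\alpha,\frac{d_M-j}{d_M}\gamma\}$ a summand equals $\alpha$ precisely when $j \leq \frac{d_M(i-1)}{\delta+i}$. The key step is the algebraic identity
$$\frac{d_M(i-1)}{\delta+i} - (k_M-1) = \frac{(\delta+1)(i-k_M)}{\delta+i} = -\frac{(\delta+1)(1-s)(k_M-1)}{\delta+i},$$
where the last equality uses $i - k_M = -(1-s)(k_M-1)$. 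Since $\delta + i = \delta + 1 + s(k_M-1) \sim s\,k_M$, the right-hand side converges to the constant $-\frac{(\delta+1)(1-s)}{s}$. Hence the threshold $T = \lfloor \frac{d_M(i-1)}{\delta+i}\rfloor$ satisfies $T = (k_M-1) - O(1)$, so all but a bounded number of the $k_M$ summands equal $\alpha$. The remaining $k_M-1-T = O(1)$ summands each lie in $[0,\alpha)$, so together they contribute only $O(1)$. Therefore $C_{k_M,d_M}(\alpha,\gamma) = (T+1)\alpha + O(1) = k_M\alpha + O(1)$.

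Putting the two estimates together yields $\frac{g_{M}(i)}{C_{k_M,d_M}(\alpha,\gamma)} = \frac{k_M\alpha+O(1)}{k_M\alpha+O(1)} \to 1$ as $M\to\infty$. I expect the main obstacle to be the bookkeeping around the threshold: verifying the displayed identity and controlling the floor, so as to confirm that only a bounded number of summands fall into the bandwidth-limited ($\min = \frac{d_M-j}{d_M}\gamma$) regime. Once that is established, the leading term $k_M\alpha$ is common to numerator and denominator and the limit follows immediately; the case $s=1$ serves as a sanity check, since then $i = k_M$, $T = k_M-1$, and both quantities equal $k_M\alpha$ exactly, recovering the MSR point.
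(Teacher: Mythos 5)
Your proof is correct and follows essentially the same route as the paper's: both arguments reduce to showing that the threshold index separating the $\alpha$-summands from the bandwidth-limited summands in $C_{k_M,d_M}$ stays within a bounded distance of $k_M-1$ (your constant $\frac{(\delta+1)(1-s)}{s}$ matches the paper's $M-t \approx \frac{d-k+1-ds}{s}$ after shifting by $k-1$). Your packaging --- numerator and denominator both equal to $k_M\alpha + O(1)$ --- is simply a cleaner bookkeeping of the same limit computation that the paper carries out via the degree-by-degree asymptotics of $h_1,\dots,h_4$.
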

\begin{proof}
Let $i=1+s(k_M-1)$. We study the behavior of the fraction for large $M$, so we have $\frac{\lfloor i \rfloor}{i} \approx 1$. Thus, to simplify the notation,  we may assume that $i$ acts as an integer. We also use the notation
$$
t=\frac{d_Ms(k_M-1)}{d-k+1+s(k_M-1)}.
$$

We have
$$
g_{M} (1+s(k_M-1))=\frac{n_M (1+s(k_M-1)) \alpha}{n-k+i}
$$
and
\begin{equation}
\begin{split}
& C_{k_M,d_M}(\alpha,\frac{(d_M-k_M+i)\alpha}{d_M-k_M+1}) \\
= & \alpha \left( \sum_{j=0}^{t} 1 + \sum_{j=t+1}^{k_M-1} \frac{d_M-j}{d_M} \cdot \frac{d-k+i}{d-k+1} \right) \\
= & \alpha  \left( t+1 +  \frac{(k_M-t-1)(2d+M-k-t)(d-k+i)}{2 d_M (d-k+1)} \right), \\
\end{split}
\end{equation}
whence
\begin{equation}
\begin{split}
& \frac{g_{M} (i)}{C_{k_M,d_M}(\alpha,\frac{(d_M-k_M+i)\alpha}{d_M-k_M+1})} \\
= & \frac{h_1(M)}{h_2(M)(h_3(M)+h_4(M))}, \\
\end{split}
\end{equation}
where
$$
h_1(M)=2n_M (1+s(k_M-1)) d_M (d-k+1),
$$
$$
h_2(M)=n-k+1+s(k_M-1),
$$
$$
h_3(M)=2(t+1)d_M(d-k+1),
$$
and
$$
h_4(M)=(k_M-t-1)(2d-k+M-t)(d-k+1+s(k_M-1)).
$$

Now it is easy to check that
$$
\frac{h_1(M)}{M^3} \rightarrow 2s(d-k+1),
$$
$$
\frac{h_2(M)}{M} \rightarrow s,
$$
and
$$
\frac{h_3(M)}{M^2} \rightarrow 2(d-k+1)
$$
as $M \rightarrow \infty.$

Note that $$M-t \approx \frac{d-k+1-ds}{s}$$ when $M$ is large and hence
\begin{equation}
\begin{split}
& \frac{h_4(M)}{M^2} \\
= & \frac{(k_M-t-1)(2d-k+M-t)}{M} \cdot \frac{d-k+1+s(k_M-1)}{M} \\
\rightarrow & 0 \cdot s=0 \\
\end{split}
\end{equation}
as $M \rightarrow \infty.$

Finally,
\begin{equation}
\begin{split}
& \frac{g_{M} (i)}{C_{k_M,d_M}(\alpha,\frac{(d_M-k_M+i)\alpha}{d_M-k_M+1})} \\
= & \frac{\frac{h_1(M)}{M^3}}{\frac{h_2(M)}{M} \cdot \frac{h_3(M)+h_4(M)}{M^2}} \\
\rightarrow & \frac{2s(d-k+1)}{s(2(d-k+1)+0)}=1 \\
\end{split}
\end{equation}
as $M \rightarrow \infty$, proving the claim.

\end{proof}

As a straightforward corollary to Theorem \ref{asymptotic} we have

\begin{thm}\label{asymptotic}
Let $s \in [0,1]$ be a fixed number and let $\gamma_{MSR}=\frac{d_M \alpha}{d_M-k_M+1}$ and $\gamma_{MBR}=\alpha$. Then
$$
\lim_{M \rightarrow \infty} \frac{C^{\text{exact}}_{n_M,k_M,d_M}(\alpha,s\gamma_{MSR}+(1-s)\gamma_{MBR})}{C_{k_M,d_M}(\alpha,s\gamma_{MSR}+(1-s)\gamma_{MBR})} = 1.
$$
\end{thm}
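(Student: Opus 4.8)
The plan is to obtain this corollary from a squeeze (sandwich) argument, so that essentially no new computation is needed beyond the preceding theorem. Write $\gamma_s = s\gamma_{MSR}+(1-s)\gamma_{MBR}$ and put $i = 1+s(k_M-1)$. By the Remark we have $\gamma_s = \frac{(d_M-k_M+i)\alpha}{d_M-k_M+1}$, so both the numerator and the denominator of the corollary are evaluated at precisely the bandwidth studied in the preceding theorem; in particular its denominator $C_{k_M,d_M}(\alpha,\gamma_s)$ agrees with the denominator there.

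Next I would bound $C^{\text{exact}}_{n_M,k_M,d_M}(\alpha,\gamma_s)$ from both sides. From above, the definitional inequality $C^{\text{exact}}_{n,k,d}(\alpha,\gamma)\le C_{k,d}(\alpha,\gamma)$ recorded in the introduction gives $C^{\text{exact}}_{n_M,k_M,d_M}(\alpha,\gamma_s)\le C_{k_M,d_M}(\alpha,\gamma_s)$. From below, Theorem \ref{boundMSR}, extended to the (generally non-integer) argument $i$ by the piecewise-linear interpolation that defines $g_M$ — this interpolant being achievable by space-sharing between the integer corner points — gives $C^{\text{exact}}_{n_M,k_M,d_M}(\alpha,\gamma_s)\ge g_M(i)$. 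Dividing the resulting chain by the positive number $C_{k_M,d_M}(\alpha,\gamma_s)$ yields
$$
\frac{g_M(i)}{C_{k_M,d_M}(\alpha,\gamma_s)} \;\le\; \frac{C^{\text{exact}}_{n_M,k_M,d_M}(\alpha,\gamma_s)}{C_{k_M,d_M}(\alpha,\gamma_s)} \;\le\; 1 .
$$
For $s\in(0,1]$ the left-hand ratio tends to $1$ by the preceding theorem, while the right-hand bound is constantly $1$, so the squeeze theorem forces the middle ratio to $1$.

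The one case not covered by this squeeze is $s=0$, which must be argued separately because the preceding theorem was stated only for $s\in(0,1]$ (its proof divides by a factor behaving like $h_2(M)/M\to s$). At $s=0$ the bandwidth is $\gamma_0=\gamma_{MBR}=\alpha$, the MBR point, which is achievable with exact repair by \cite{kumar}; hence $C^{\text{exact}}_{n_M,k_M,d_M}(\alpha,\alpha)=C_{k_M,d_M}(\alpha,\alpha)$ and the ratio equals $1$ for every $M$. I expect no genuine analytic obstacle here: all of the limit computation lives in the preceding theorem, and the only points requiring real care are the alignment of the two bandwidths via the Remark, the legitimacy of the lower bound at a non-integer $i$, and the boundary value $s=0$.
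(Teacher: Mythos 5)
Your proof is correct and is essentially the argument the paper intends: the paper states this result without proof as a ``straightforward corollary,'' and the evident route is exactly your squeeze between the achievability bound $g_M(i)$ (aligned with $\gamma_s$ via the Remark) and the definitional upper bound $C^{\text{exact}}_{n,k,d}(\alpha,\gamma)\le C_{k,d}(\alpha,\gamma)$. You also correctly flag and patch the two points the paper glosses over --- the legitimacy of the lower bound at non-integer $i$ via space-sharing, and the boundary case $s=0$ (not covered by the preceding theorem, whose limit argument degenerates there), which you settle by exact MBR achievability from \cite{kumar}.
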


\section{Conclusions}
We have shown in this paper that when $n$, $k$, and $d$ are close to each other, the capacity of a distributed storage system when exact repair is assumed is essentially the same as when only functional repair is required. This was proved by using a specific code construction exploiting some already known codes achieving the MSR point on the tradeoff curve and by studying the asymptotic behavior of the capacity curve.

However, when $n$, $k$, and $d$ are not close to each other then the bound our construction gives is not good. So as a future work it is still left to find the precise expression of the capacity of a distributed storage system when exact repair is assumed, and especially to study the behavior of the capacity when $n$, $k$, and $d$ are not close to each other.

\section{Acknowledgments} This research was partly supported by the Academy of Finland (grant \#131745) and by the Emil Aaltonen Foundation, Finland, through grants to Camilla Hollanti.

Dr. Salim El Rouayheb at the Princeton University is gratefully acknowledged for useful discussions. Dr. Camilla Hollanti at the Aalto University is gratefully acknowledged for useful comments on the first draft of this paper.


\end{document}